\newif\ifnotes
\newcommand{\frameit}[2]{
    \begin{center}
    {\color{BrickRed}
    \framebox[3.3in][l]{
        \begin{minipage}{3in}
        \inred{#1}: {\sf\color{Black}#2}
        \end{minipage}
    }\\
    }
    \end{center}
}
\newcommand{\inred}[1]{{\color{BrickRed}\sf\textbf{\textsc{#1}}}}
\newcommand{\note}[1]{\frameit{Note}{#1}}
\newcommand{\todo}[1]{\frameit{To-do}{#1}}
\newcommand{\inote}[1]{\inred{$<<<${#1}$>>>$}}
\newcommand{\note}[1]{}
\newcommand{\todo}[1]{}
\newcommand{\inote}[1]{}
\newcommand{\NN}{{\mathbb N}}
\newtheorem{theorem}{Theorem}[section]
\newtheorem{proposition}[theorem]{Proposition}
\newtheorem{definition}[theorem]{Definition}
\newcommand{\be}{\begin{equation}}
\newcommand{\ee}{\end{equation}}
\newcommand{\comment}[1]{}
\begin{document}



\title{Finding All Allowed Edges in a Bipartite Graph}

\author{
\begin{tabular}{c}
Tamir Tassa\thanks{Department of Mathematics and Computer Science, The Open University, 1 University Road, Ra'anana, Israel 43537. Telephone:
+972-9-7781272. Fax: +972-9-7780706. Email: {\tt tamirta@openu.ac.il}}\\
\end{tabular}
}

\maketitle

\begin{abstract}
We consider the problem of finding all allowed edges in a bipartite graph $G=(V,E)$, i.e., all edges
that are included in some maximum matching.
We show that given any maximum matching in the graph,
it is possible to perform this computation in linear time $O(n+m)$ (where $n=|V|$ and $m=|E|$).
Hence, the time complexity of finding all allowed edges reduces to that of finding a single maximum matching, which
is $O(n^{1/2}m)$ (Hopcroft and Karp \cite{HK}), or $O((n/\log n)^{1/2}m)$ for dense graphs with $m=\Theta(n^2)$ (Alt et al. \cite{ABMP}).
This time complexity improves upon that of the best known algorithms for the problem, which is $O(nm)$ (Costa \cite{C94} for bipartite graphs,
and Carvalho and Cheriyan \cite{CC05} for general graphs).
Other algorithms for solving that problem are randomized algorithms due to Rabin and Vazirani \cite{RV89} and Cheriyan \cite{C97}, the
runtime of which is $\tilde{O}(n^{2.376})$. Our algorithm, apart from being deterministic,
improves upon that time complexity for bipartite graphs
when $m=O(n^r)$ and $r<1.876$. In addition, our algorithm is elementary, conceptually simple, and easy to implement.
\end{abstract}

\noindent
{\bf Keywords.} Bipartite graphs, perfect matchings, maximum matchings, allowed edges

\section{Introduction}
Let $V_1$ be a set of men and $V_2$ be a set of women that are registered in a matchmaking agency.
Each of the men (women)
has an associated description and given preferences regarding the woman (man) with whom he (she) wishes to get acquainted.
Those descriptions and preferences induce a bipartite graph $G$ of compatibilities on the set of nodes $V=V_1 \cup V_2$.
The agency then presents to the clients the links that are relevant to them, so that
they can choose whom to meet.
Some of those links may not be extended to a maximum matching in the bipartite graph. The agency has an incentive to identify those links upfront
and not offer them to its clients, since if they would be offered and one of them would be successful, it would prevent the possibility of
achieving a maximum matching.
Therefore, given the full bipartite graph $G$, it is needed to remove from it all edges that are not part of a maximum matching,
or, alternatively speaking, it is needed to compute the subgraph of $G$ that equals the
union of all maximum matchings in $G$.

In order to formalize this problem, we recall some of the basic terminology regarding matchings in graphs.

\newpage
\begin{definition}\label{def1} Let $G=(V,E)$ be a graph.
\begin{itemize}
\item A matching in $G$ is a collection $M \subseteq E$ of non-adjacent edges.
\item A matching $M$ is called maximal if it is not a proper subset of any other matching.
\item A maximal matching $M$ is called a maximum (cardinality) matching if there does not exist a matching with a greater cardinality.
\item A maximum matching $M$ is called perfect if it covers all of the nodes in $V$.
\item An edge $e \in E$ is an allowed edge if it is included in some maximum matching.
\end{itemize}
\end{definition}

We study here one of the fundamental problems in matching theory, namely,
the identification of all allowed edges in a given graph $G$.
We focus here on bipartite graphs, i.e. $G=(V,E)$ where $V=V_1 \cup V_2$, $V_1 \cap V_2 = \emptyset$, and $E \subseteq V_1 \times V_2$,
and devise an efficient algorithm for that problem in such graphs.

\smallskip
Our main contributions herein are summarized as follows:
In case we are given one of the maximum matchings in the bipartite graph $G=(V,E)$, Algorithm \ref{algorithm:cycles2} in
Section \ref{sec2}
finds all allowed edges in linear time $O(n+m)$, where $n=|V|$ and $m=|E|$. (In the application that motivated this study there was indeed
one ``natural" maximum matching that was known without any computation, see Section \ref{epilogue}.)
Without such prior knowledge on the graph, we may first find any maximum matching in the graph, and then proceed to apply
Algorithm \ref{algorithm:cycles2} in order to complete the computation in additional linear time.
Algorithm \ref{algorithm:1p} implements this approach. Its runtime is dominated by the runtime of the procedure for finding
a maximum matching.

\begin{algorithm}[h!!]
\caption{\label{algorithm:1p} Finding all allowed edges in a bipartite graph $G=(V,E)$.}
\begin{algorithmic}[1]
\INPUT A bipartite graph $G=(V,E)$.
\OUTPUT All allowed edges in $E$.
\STATE Find a maximum matching in $G$, say $M$.
\STATE Using $M$, invoke Algorithm \ref{algorithm:cycles2} in order to find all allowed edges in $G$.
\end{algorithmic}
\end{algorithm}

The Hopcroft-Karp algorithm \cite{HK} offers the best known worst-case performance for finding a maximum matching
in a bipartite graph, with a runtime of $O(n^{1/2}m)$.
For dense bipartite graphs, a slightly better alternative exists:
An algorithm by Alt et al. \cite{ABMP} finds a maximum matching in a bipartite graph in time
$O\left(n^{3/2}\left(\frac{m}{\log n} \right)^{1/2}\right)$. In cases where
$m=\Theta(n^2)$, it becomes
$O((n/\log n)^{1/2}m)$, whence it is a $({\log n})^{1/2}$-factor faster than the Hopcroft-Karp algorithm.

A faster algorithm for finding a maximum matching in bipartite graphs was recently proposed by Goel et al. \cite{GKK10} for the special case
of $d$-regular graphs (graphs in which all nodes have degree $d$). However, in such graphs all edges are allowed (see more on that in
Section \ref{disc2}).

The paper is organized as follows. The algorithm for finding all allowed edges from any maximum matching in the bipartite graph is given in Section \ref{sec2}.
In Section \ref{disc3} we consider the problem in a dynamic setting and discuss an efficient implementation in such settings.
In Section \ref{related} we discuss related work and compare our algorithm to the leading algorithms. We conclude in Section \ref{epilogue}
in which we describe problems from data privacy that motivated this study and suggest future research.
In the Appendix we comment on the size of matching to which a given set of non-adjacent edges can be extended
(Section \ref{disc1}), and prove
that in regular bipartite graphs all edges are allowed (Section \ref{disc2}).

\section{A linear time algorithm for finding all allowed edges from any maximum matching}\label{sec2}
We begin by considering in Section \ref{sec21}
the case of balanced graphs ($|V_1|=|V_2|$) with a perfect matching.
In Section \ref{sec22} we consider the general case.

\subsection{Bipartite graphs with a perfect matching}\label{sec21}
Let us denote the nodes of the graph as follows: $V_1 = \{ v_1,\ldots,v_{\tilde{n}}\}$ and $V_2 = \{v'_1,\ldots,v'_{\tilde{n}}\}$ (here, $\tilde{n}=\frac{n}{2}$).
The graph $G$ is assumed to have at least one known perfect
matching; without loss of generality, we assume that it
is $M:=\{(v_1,v'_1),\ldots,(v_{\tilde{n}},v'_{\tilde{n}}) \}$.

\begin{definition}\label{defcyc} A set of $\ell \geq 1$ edges in the graph $G$ is called
an {\em alternating cycle} (with respect to $M$)
if there exist $\ell$ distinct indices, $1 \leq i_1,\ldots,i_\ell \leq \tilde{n}$,
such that the $\ell$ edges are
$(v_{i_1},v'_{i_2})$, $(v_{i_2},v'_{i_3})$,$\ldots$, $(v_{i_{\ell-1}},v'_{i_\ell})$,$(v_{i_{\ell}},v'_{i_1}) $.
\end{definition}

\begin{theorem}\label{prp34}
Let $G=(V,E)$ be a bipartite graph where $V=V_1 \cup V_2$,
$V_1 = \{ v_1,\ldots,v_{\tilde{n}}\}$, $V_2 = \{v'_1,\ldots,v'_{\tilde{n}}\}$, and $E \subseteq V_1 \times V_2$. Assume further
that $M:=\{(v_1,v'_1),\ldots,(v_{\tilde{n}},v'_{\tilde{n}}) \} \subseteq E$.
Then an edge $e \in E$ is allowed if and only if
it is included in an alternating cycle.
\end{theorem}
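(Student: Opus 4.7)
The plan is to establish both directions via the classical symmetric-difference argument for matchings, specialized to the case where $M$ is a perfect matching. First, observe that the degenerate $\ell=1$ case of Definition \ref{defcyc} is precisely the edge $(v_{i_1},v'_{i_1}) \in M$ itself, so every matching edge is trivially ``included in an alternating cycle'' and is trivially allowed (it lies in the maximum matching $M$). It therefore suffices to treat $e \notin M$.

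For the implication from alternating cycle to allowed ($\Leftarrow$), let $e \notin M$ lie on an alternating cycle $C = \{(v_{i_1},v'_{i_2}),(v_{i_2},v'_{i_3}),\ldots,(v_{i_\ell},v'_{i_1})\}$ with $\ell \ge 2$. I would form a new edge set $M'$ by swapping along $C$: remove from $M$ the $\ell$ matching edges $(v_{i_1},v'_{i_1}),\ldots,(v_{i_\ell},v'_{i_\ell})$ and insert instead the $\ell$ edges of $C$. Because the removed and the inserted edges cover exactly the same node set, namely $\{v_{i_1},\ldots,v_{i_\ell}\} \cup \{v'_{i_1},\ldots,v'_{i_\ell}\}$, the resulting $M'$ is again a perfect matching, and it contains $e$. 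So $e$ is allowed.

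For the converse ($\Rightarrow$), suppose $e \notin M$ is allowed, so $e \in M^\ast$ for some perfect matching $M^\ast$ of $G$. I would then analyze the symmetric difference $H := M \triangle M^\ast$: since $M$ and $M^\ast$ are perfect matchings on the same vertex set, every vertex has degree $0$ or $2$ in $H$, so $H$ decomposes into vertex-disjoint simple cycles whose edges strictly alternate between $M$ and $M^\ast$. The edge $e$ lies on one of these cycles, call it $C$. It remains to check that the $M^\ast$-edges of $C$ fit the template of Definition \ref{defcyc}. Bipartiteness forces the vertices of $C$ to alternate between $V_1$ and $V_2$; and since every $M$-edge has the form $(v_k,v'_k)$, an $M$-edge of $C$ incident to $v'_{i_k}$ is forced to go to $v_{i_k}$. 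This pins down the $M^\ast$-edges of $C$ to appear in precisely the required form $(v_{i_1},v'_{i_2}),(v_{i_2},v'_{i_3}),\ldots,(v_{i_\ell},v'_{i_1})$ with the indices $i_1,\ldots,i_\ell$ pairwise distinct, and $e$ is one of them.

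I do not anticipate a genuine obstacle, since this is a textbook symmetric-difference argument. The only step requiring any care is the bookkeeping in the $(\Rightarrow)$ direction, namely converting the abstract cycle $C \subseteq M \triangle M^\ast$ into the concrete index pattern $(v_{i_k},v'_{i_{k+1}})$ demanded by Definition \ref{defcyc}; but this is immediate once one exploits the fact that $M$-edges are all of the form $(v_k,v'_k)$, forcing the vertex labels encountered along $C$ to line up as in the definition.
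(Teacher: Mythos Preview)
Your proof is correct and follows essentially the same approach as the paper. The only cosmetic difference is in the $(\Rightarrow)$ direction: the paper traces out the cycle directly by following edges of $M'$ (from $e_j=(v_{i_j},v'_{i_{j+1}})$ it jumps to the $M'$-edge incident to $v_{i_{j+1}}$, and argues the sequence must close up at $e_1$), whereas you invoke the standard decomposition of $M \triangle M^\ast$ into disjoint even cycles; these are the same argument in different packaging.
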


\begin{proof} Assume that $e=(v_{i_1},v'_{i_2})$ is part of an alternating cycle, say
$ (v_{i_1},v'_{i_2})$, $(v_{i_2},v'_{i_3})$,$\ldots$, $(v_{i_{\ell-1}},v'_{i_\ell})$, $(v_{i_{\ell}},v'_{i_1}) $.
If we augment this alternating cycle with the ${\tilde{n}}-\ell$ edges $(v_i,v'_i)$ for all $ i \notin \{ i_1,\ldots,i_\ell\}$
we get a perfect matching. Hence, $e$ is allowed.

Assume next that $e=(v_{i_1},v'_{i_2})$ is allowed. Then it is included in some perfect matching $M'$.
We proceed to define a sequence of edges $S_e \subseteq  M'$ in the following manner: $e_1:=e=(v_{i_1},v'_{i_2})$; then, for all $j \geq 1$, if $e_j=(v_{i_j},v'_{i_{j+1}})$,
we set $e_{j+1}:=(v_{i_{j+1}},v'_{i_{j+2}})$, where $v'_{i_{j+2}}$ is the node that $M'$ matches with $v_{i_{j+1}}$.
Since $M'$ is finite, the sequence must repeat itself at some point. Namely, there must exist a minimal index $j_0 \geq 2$
such that $e_{j_0}$ equals one of the previous edges in $S_e$. It is easy to see that $e_{j_0}$ must coincide with $e_1$ since if $e_{j_0}$ would coincide
with, say $e_2$, then $M'$ would have included two different edges that are incident to $v_{i_2}$. But then the sequence $S_e$ up to that point,
$S_e=\{e_1=e,e_2,\ldots,e_{j_0-1}\}$ is an alternating cycle with respect to $M$. Hence, every allowed edge must be contained in an alternating cycle.
\end{proof}

Next, we define the directed
graph $H=(U,F)$ that is induced by the bipartite graph $G=(V,E)$. In the directed graph $H=(U,F)$ the set of nodes
is $U=\{u_1,\ldots,u_{\tilde{n}}\}$ and $(u_i,u_j) $ is a directed edge in that graph
if and only if $i \neq j$ and $(v_i,v'_j) \in E$.
See example of a bipartite graph $G$ and the corresponding directed graph $H$ in Figure \ref{bipart1}.

\begin{figure}[h!!!]
\begin{center}
{\includegraphics[scale=0.6]{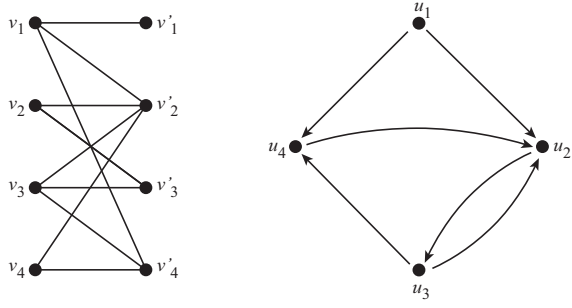}}
\caption{\label{bipart1} A bipartite graph $G$ and the corresponding directed graph $H$}
\end{center}
\end{figure}

It is easy to see that, in view of Theorem \ref{prp34},
an edge $(v_i,v'_j) \in E$ is allowed in $G$ if and only if $i=j$ or the edge $(u_i,u_j) \in F$ is part of a directed cycle in $H$.
For example, the edge $(v_3,v'_4)$ is allowed in $G$ that is given in Figure \ref{bipart1}
since it is part of the perfect matching $\{(v_1,v'_1),(v_2,v'_3),(v_3,v'_4),(v_4,v'_2)\}$ (in which the
last three edges are an alternating cycle), and indeed the corresponding edge $(u_3,u_4)$ is part of a directed cycle of length 3 in $H$.

Therefore, the problem of finding all allowed edges in $G$ reduces to the problem of finding all edges in the directed graph $H$
that are part of a directed cycle. This may be achieved as follows: First, one has to find all strongly connected components of $H$;
namely, all maximal strongly connected subgraphs of $H$ (recall that a directed graph is strongly connected if
there is a path from each node in the graph to every other node).
If each strongly connected component is contracted to a single node, the resulting graph is a directed acyclic graph.
Consequently, a given edge in $H$ is part of a directed cycle if and only if it connects two nodes in the same strongly
connected component.
There are several efficient algorithms for finding the strongly connected components of a given directed graph.
Tarjan's algorithm \cite{Tar} and Cheriyan-Mehlhorn-Gabow algorithm \cite{CM} are both equally efficient with a linear
runtime. Finally, we use those findings to mark all allowed edges in the original bipartite graph $G$.
Algorithm \ref{algorithm:cycles} does all of the above. Its runtime is $O(n+m)$.

\begin{algorithm}[h!!]
\caption{\label{algorithm:cycles} Finding all allowed edges in a bipartite graph, given a perfect matching.}
\begin{algorithmic}[1]
\INPUT A bipartite graph $G=(V,E)$ where $V=V_1 \cup V_2$,
$V_1=\{v_1,\ldots,v_{\tilde{n}}\}$, $V_2=\{v'_1,\ldots,v'_{\tilde{n}}\}$, $E \subseteq V_1 \times V_2$,
and for all $1 \leq i \leq {\tilde{n}}$, $(v_i,v'_i) \in E$.
\OUTPUT All allowed edges in $E$.
\STATE Mark all edges in $E$ as not allowed.
\STATE Construct the directed graph $H=(U,F)$ that corresponds to $G$.
\STATE Find all strongly connected components of $H$.
\FOR {all edges $(u_i,u_j) \in F$}
\IF {$u_i$ and $u_j$ belong to the same strongly connected component in $H$}
\STATE Mark the edge $(v_i,v'_j) \in E$ as allowed.
\ENDIF
\ENDFOR
\STATE Mark all edges $(v_i,v'_i) \in E$ as allowed, $1 \leq i \leq {\tilde{n}}$.
\end{algorithmic}
\end{algorithm}

\subsection{General bipartite graphs}\label{sec22}
After dealing with balanced bipartite graphs that have a perfect matching, we turn our attention to the general case.
Hereinafter we let $G=(V,E)$ be a bipartite graph where the two parts of the
graph are $V_1 = \{ v_1,\ldots,v_{n_1}\}$ and $V_2 = \{v'_1,\ldots,v'_{n_2}\}$, $n_1 \leq n_2$, and the maximum matchings are of size $t \leq n_1$.
We may assume that $t<n_2$ since if $t=n_2$, $G$ is a balanced bipartite graph with a perfect matching, and we already solved the problem for such graphs.
Let $M$ be a given maximum matching in $G$. We may assume, without loss of generality, that
$M:=\{(v_1,v'_1),\ldots,(v_{t},v'_{t}) \} $.

\begin{definition} A node $v_i$ or $v'_i$ is called an $M$-upper node if $i \leq t$ and an $M$-lower node if $i>t$.
An edge $(v_i,v'_j)$ in $G$ is called an $M$-upper edge if it connects two $M$-upper nodes; all other edges are called $M$-lower edges.
\end{definition}

In other words, a node is called $M$-upper if it is covered by $M$, and called $M$-lower otherwise.
(The
terminology simply reflects the fact that the so-called $M$-upper
nodes appear in the upper part of the graph.) Consider, for
example, the graph in the left of Figure \ref{bipart2}. In that
graph $n_1=n_2=4$, $t=3$, and $M=\{(v_1,v'_1),(v_2,v'_2),(v_3,v'_3)\}$; the graph has six $M$-upper nodes and two
$M$-lower ones. The edges $(v_3,v'_4)$ and $(v_4,v'_1)$ are $M$-lower
edges, while all other five edges are $M$-upper.

\begin{figure}[h!!!]
\begin{center}
{\includegraphics[scale=0.6]{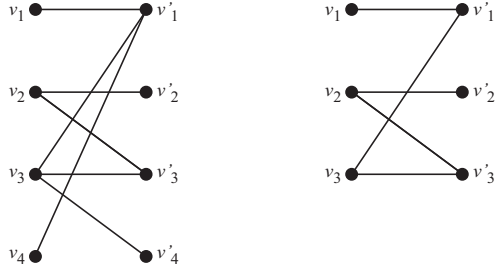}}
\caption{\label{bipart2} A bipartite graph $G$ and the corresponding restriction $G_u$}
\end{center}
\end{figure}

For simplicity, we shall use hereinafter the terms upper and lower without specifying the prefix $M$. It should always be understood that a node or an edge
are upper or lower with respect to the given $M$.

Our first observation towards a classification of all allowed edges in such a graph is as follows.
\begin{proposition}\label{prop45} Let $G=(V,E)$ be a bipartite graph as described above. Then:
\begin{enumerate}
\item The graph has no lower edges that connect two lower nodes.
\item All lower edges are allowed.
\end{enumerate}
\end{proposition}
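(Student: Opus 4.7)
The plan is to dispatch the two parts in the natural order, using Part 1 as a structural tool for Part 2.

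For Part 1, I would argue by contradiction. Suppose $(v_i, v'_j) \in E$ is a lower edge both of whose endpoints are lower nodes, so $i > t$ and $j > t$. By the definition of upper/lower, neither $v_i$ nor $v'_j$ is covered by $M$. Then $M \cup \{(v_i, v'_j)\}$ is still a collection of pairwise non-adjacent edges, hence a matching, and its cardinality is $t+1$. This contradicts the assumption that the maximum matching size in $G$ equals $t$.

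For Part 2, let $e = (v_i, v'_j)$ be any lower edge. By Part 1, $e$ cannot have both endpoints lower, so exactly one endpoint is lower (a lower edge by definition has at least one lower endpoint). I would handle the two symmetric cases separately. If $i > t$ and $j \leq t$, then $v_i$ is uncovered by $M$ while $v'_j$ is covered by the edge $(v_j, v'_j) \in M$; I set
\[
M' := (M \setminus \{(v_j, v'_j)\}) \cup \{(v_i, v'_j)\}.
\]
Since $v_i$ was previously uncovered and the only edge of $M$ incident to $v'_j$ has been removed, the set $M'$ is again a matching, and $|M'| = t$, so it is a maximum matching containing $e$. The case $i \leq t$ and $j > t$ is handled by the analogous swap $M' := (M \setminus \{(v_i, v'_i)\}) \cup \{(v_i, v'_j)\}$.

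I do not foresee a real obstacle here; both parts reduce to one-step modifications of $M$. The only point worth stating carefully is the use of Part 1 to rule out the lower-lower case in Part 2, so that the swap argument is exhaustive. The assumption $t < n_2$ recorded just before the proposition guarantees that lower nodes on the $V_2$ side actually exist, but the argument itself does not depend on which side contains the unique lower endpoint of $e$.
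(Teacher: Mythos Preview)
Your proof is correct and follows essentially the same approach as the paper: both parts are handled by one-step modifications of $M$, with Part~1 used to ensure the swap in Part~2 is well-defined. If anything, you are slightly more careful than the paper, which only writes out the case $i>t$ explicitly and leaves the symmetric case $j>t$ implicit.
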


\begin{proof} The first claim is obvious since if there was a lower edge $(v_i,v'_j)$ where $i,j>t$, then $M \cup \{(v_i,v'_j)\}$ would have been a matching, thus
contradicting the assumed maximality of $M$. As for the second claim, let $(v_i,v'_j)$ be a lower edge with $i>t$. In view of the first claim, $v'_j$ must be an upper
node.
Therefore, $\left( M \cup \{(v_i,v'_j)\} \right) \setminus \{ (v_j,v'_j)\}$ is also a maximum matching, whence $(v_i,v'_j)$ is an allowed edge.
\end{proof}

As Proposition \ref{prop45} determines that all lower edges are allowed, it remains to identify the allowed edges among the upper edges.
Let $G_u$ denote the restriction of $G$ to the upper nodes, $\{v_1,v'_1,\ldots,v_{t},v'_{t} \}$.
We distinguish between two types of allowed edges among the upper edges:

\begin{definition} An upper edge is an allowed edge of type I (with respect to $M$) if it is included in a maximum matching that consists only of upper edges.
An upper edge is an allowed edge of type II (with respect to $M$) if it is allowed, but all maximum matchings that include it include also a lower edge.
\end{definition}

For example, consider the graph $G$ on the left of Figure \ref{bipart2} and the corresponding $G_u$ on the right. The edge $(v_3,v'_1)$
is not allowed, but all other six edges are allowed:
\begin{enumerate}
\item The edges $(v_3,v'_4)$ and $(v_4,v'_1)$ are allowed since they are lower edges. (For example, $(v_3,v'_4)$ may be extended to a maximum matching with
$(v_1,v'_1)$ and $(v_2,v'_2)$.)
\item The edges $(v_1,v'_1)$, $(v_2,v'_2)$, and $(v_3,v'_3)$ are allowed edges of type I since they are allowed edges also in $G_u$.
\item The edge $(v_2,v'_3)$ is an allowed edge of type II since it is allowed (together with $(v_1,v'_1)$ and $(v_3,v'_4)$ it forms a maximum matching), but it is not
allowed in $G_u$.
\end{enumerate}

All allowed edges of type I can be identified by applying Algorithm \ref{algorithm:cycles} on the subgraph $G_u$ (since $G_u$ is a balanced graph with a perfect matching).
It remains only to identify the allowed edges of type II.

\begin{definition}\label{defpath} A set $P$ of $\ell-1 \geq 1$ upper edges in $G$ is called
an {\em upper alternating path} (with respect to $M$)
if there exist $\ell$ distinct indices, $1 \leq i_1,\ldots,i_\ell \leq t$,
such that
\be P=\{ (v_{i_1},v'_{i_2}), (v_{i_2},v'_{i_3}), \ldots, (v_{i_{\ell-1}},v'_{i_\ell}) \} \,.\label{bip1}\ee
If $i_0>t$ and $\lambda=(v_{i_0},v'_{i_1})$ is a lower edge in $E$ then $\{\lambda\} \cup P$ is called a {\em left-augmented alternating path}.
If $i_{\ell+1}>t$ and $\rho=(v_{i_\ell},v'_{i_{\ell+1}})$ is a lower edge in $E$ then $ P \cup \{\rho\}  $ is called a {\em right-augmented alternating path}.
Finally, we let ${\cal AP}$ denote the set consisting of all alternating paths of all three sorts --- upper, left- and right-augmented.
\end{definition}

For example, $P=\{(v_2,v'_3)\}$ is an upper alternating path of length 1 in the graph $G$ of Figure \ref{bipart2}, while $P \cup \{(v_3,v'_4)\}$ is a corresponding right-augmentation.

\begin{theorem}\label{thm1} Any allowed edge of type II with respect to $M$ must be contained in
an either left- or right-augmented alternating path in $G$ with respect to $M$. On the other hand, an edge that is contained in an either
left- or right-augmented alternating path is allowed.
\end{theorem}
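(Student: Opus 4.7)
The plan is to prove both directions, with the main effort in the forward direction. The converse (an edge in an augmented path is allowed) is proved by construction: given a left-augmented alternating path $\{\lambda\} \cup P$ with $\lambda = (v_{i_0}, v'_{i_1})$, $i_0 > t$, and $P = \{(v_{i_1}, v'_{i_2}), \ldots, (v_{i_{\ell-1}}, v'_{i_\ell})\}$, I would exhibit the matching
\[
M^* = \bigl(M \setminus \{(v_{i_k}, v'_{i_k}) : 1 \leq k \leq \ell\}\bigr) \cup \{\lambda\} \cup P .
\]
A quick count shows $|M^*| = (t - \ell) + \ell = t$, so it is maximum; non-adjacency of its edges follows because the removed $M$-edges exactly liberate all upper indices $i_1, \ldots, i_\ell$, and $\{\lambda\} \cup P$ re-covers the $v'_{i_k}$ for $k = 1, \ldots, \ell$ and the $v_{i_k}$ for $k = 1, \ldots, \ell - 1$, gaining the lower node $v_{i_0}$ and losing $v_{i_\ell}$. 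Since $M^*$ contains the whole path, every edge of the path is allowed. The right-augmented case is entirely symmetric.

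For the forward direction, let $e = (v_a, v'_b)$ be an allowed edge of type II and let $M'$ be a witnessing maximum matching containing $e$ and at least one lower edge. The approach is to study the connected component $C_e$ of $e$ in the symmetric difference $M \triangle M'$; as in every matching symmetric difference, $C_e$ is either an alternating cycle or an alternating path.

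I would first rule out the cycle case: each vertex of a cycle in $M \triangle M'$ has degree $2$, hence is $M$-covered and therefore upper, so the swap $M \triangle C_e$ would be a maximum matching containing $e$ but using only upper edges, contradicting type II. Hence $C_e$ is a path. Each endpoint of $C_e$ has degree $1$, and a short argument shows that its single $C_e$-edge lies in $M$ iff the endpoint is not $M'$-covered (call this Type A) and in $M'$ iff it is not $M$-covered (Type B). Since $M$ covers exactly the upper nodes, Type A endpoints are upper and Type B endpoints are lower. If both endpoints were Type A, the swap $M' \triangle C_e$ would produce a matching of size $|M'| + 1$, violating the maximality of $M'$; therefore $C_e$ has at least one Type B endpoint $w_0$, which is a lower node, and whose incident $C_e$-edge is a lower edge of $M'$.

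Finally, I would extract the augmented alternating path from $C_e$. Walking along $C_e$ starting at $w_0$ and passing through $e$, the interior nodes are all upper (being $M$-covered) and pairwise distinct (since $C_e$ is simple), so the $M'$-edges encountered form a sequence $(v_{i_0}, v'_{i_1}), (v_{i_1}, v'_{i_2}), \ldots$ with $i_0 > t$ and the subsequent indices lying in $\{1, \ldots, t\}$ and pairwise distinct. Truncating this sequence at $e$ yields a left-augmented alternating path containing $e$ when $w_0 \in V_1$; when $w_0 \in V_2$, reversing the walk converts the same data into a right-augmented alternating path. The main obstacle is this last bookkeeping step --- aligning the indices extracted from $C_e$ with the formal indexing of Definition \ref{defpath} in the two possible orientations --- but the underlying structural fact that such a path must exist, once $C_e$ is known to be a path with a lower endpoint, is forced by the analysis above.
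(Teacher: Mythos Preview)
Your proof is correct and takes a genuinely different route from the paper's. For the converse you and the paper give essentially the same explicit construction. For the forward direction, however, the paper does not use the symmetric difference $M \triangle M'$. Instead, it fixes a maximum matching $M_e \ni e$, considers the family of all alternating paths (upper, left-, or right-augmented) lying inside $M_e$ and containing $e$, and picks one of maximal length, say $P$. It then argues directly that $P$ cannot be purely upper: examining the two ``loose ends'' $v'_{i_1}$ and $v_{i_\ell}$ of such a $P$, any $M_e$-edge incident to either of them would extend $P$ (contradicting maximality of $P$), create a double cover in $M_e$, or close $P$ into an alternating cycle in $G_u$ (contradicting type~II via Theorem~\ref{prp34}); hence both ends are $M_e$-unmatched, and then swapping $P$ for the $\ell$ diagonal edges $(v_{i_k},v'_{i_k})$ enlarges $M_e$, a contradiction.

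Your symmetric-difference argument is the more classical matching-theory maneuver and arguably cleaner: the path/cycle structure of $M \triangle M'$ is standard, the cycle case is dispatched by the type~II hypothesis, and the lower endpoint falls out of $|M|=|M'|$. The paper's argument, by contrast, works entirely inside the single matching $M_e$ and mirrors the style of the proof of Theorem~\ref{prp34}; it avoids invoking the decomposition of a symmetric difference but pays for that with a longer case analysis at the endpoints of $P$. Both arguments ultimately hinge on the same dichotomy---either $e$ sits on a cycle of $M_e$-edges relative to $M$ (forcing type~I) or on a path that must terminate at an $M$-uncovered, hence lower, node.
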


\begin{proof} Let $e=(v_i,v'_{i'})$ be an upper edge which is an allowed edge of type II with respect to $M$.
Let $M_e$ be a maximum matching that includes $e$. Define
$${\cal AP}_e = \{ P \in {\cal AP}: P \subseteq M_e \mbox{ and } e \in P \}\,;$$
namely, ${\cal AP}_e$ consists of all alternating paths (upper, left- and right-augmented ones) that are contained in $M_e$ and include $e$.
Clearly, ${\cal AP}_e$ is nonempty since it includes the upper alternating path of length 1 that consists only of $e$.
Hence, we may select a path $P \in {\cal AP}_e$ of maximal length
among all paths in ${\cal AP}_e$.
We claim, and prove below, that $P$ cannot be an upper alternating path. Hence, $P$ must be either left- or right-augmented alternating path.
Since $P$ includes $e$, that will settle the necessary condition in the theorem: Any allowed edge of type II with respect to $M$ must be contained in
an either left- or right-augmented alternating path with respect to $M$.

Assume that $P$ is an upper alternating path. Denoting its length by $\ell-1$, $P$ must take the form
$$ P=\{ (v_{i_1},v'_{i_2}), (v_{i_2},v'_{i_3}), \ldots, (v_{i_{\ell-1}},v'_{i_\ell}) \} \,,$$
where $1 \leq i_1,\ldots,i_\ell \leq t$.
We claim that the matching $M_e$ does not match neither $v'_{i_1}$ nor $v_{i_\ell}$. Assume, towards contradiction, that $M_e$ includes an edge $(v_j,v'_{i_1})$.
Then $j$ cannot be any of the indices outside the set $\{i_1,\ldots,i_\ell\}$, as that would contradict the assumed maximality of $P$.
In addition, $j$ cannot be any of the indices in $\{i_1,\ldots,i_{\ell-1}\}$ since then $M_e$ would include two edges that are adjacent to $v_j$.
Finally, $j$ cannot be $i_\ell$ since then $e$ would have been contained in an alternating
cycle in $G_u$ (Definition \ref{defcyc}) and then, by Theorem \ref{prp34}, $e$ would have been allowed in $G_u$, thus contradicting our assumption that $e$ is
an allowed edge of type II.
Therefore, $M_e$ does not match $v'_{i_1}$. For the same reasons, $M_e$ cannot match $v_{i_\ell}$. But then, consider the set of edges
$$ M'_e = \left( M_e \setminus P\right) \cup \{ (v_{i_k},v'_{i_k}) : 1 \leq k \leq \ell \} \,.$$
In view of the above discussion, $M'_e$ is a matching (since $M_e \setminus P$ does not cover any of the nodes $v_{i_k}$ and $v'_{i_k}$, $1 \leq k
\leq \ell$). But as $|M'_e|=|M_e|+1$, it contradicts our assumption that $M_e$ is a maximum matching.

Next, we turn to prove the sufficiency of the condition; namely, that an edge that is contained in an
either left- or right-augmented alternating path with respect to $M$ must be allowed in $G$. Let
$$ P=\{ (v_{i_0},v'_{i_1}), (v_{i_1},v'_{i_2}), (v_{i_2},v'_{i_3}), \ldots, (v_{i_{\ell-1}},v'_{i_\ell}) \} $$
be a left-augmented path in $G$ (namely, $i_0>t$ and $i_k \leq t$ for all $1 \leq k \leq \ell$). Then
\be M_0 = P \cup \{ (v_j,v'_j) : j \in \{1,\ldots,t\} \setminus \{i_1,\ldots,i_\ell\}\} \label{m0}\ee
is a matching of cardinality $t$. Hence, all edges in $P$ are included in a maximum matching, whence they are all allowed. (One of those allowed edges, namely
$(v_{i_0},v'_{i_1})$, is a lower edge, while all the other upper edges may be allowed edges of either type I or II). The proof for right-augmented paths is similar.
\end{proof}

Hence, in order to identify all allowed edges of type II with respect to $M$, we have to scan all left- and right-augmented alternating paths and mark all upper
edges along them as being allowed
(some of those allowed edges may be of type I, whence they will be "discovered" twice).
To do that, we use again the representation of $G$ as a directed graph $H=(U,F)$.

\begin{definition} Let $G=(V,E)$ be a bipartite graph where the two parts of the
graph are $V_1 = \{ v_1,\ldots,v_{n_1}\}$ and $V_2 = \{v'_1,\ldots,v'_{n_2}\}$, and $n_1 \leq n_2$.
The corresponding left-to-right directed graph is $H_{LR}=(U,F_{LR})$
where $U=\{u_1,\ldots,u_{n_2}\}$ and $(u_i,u_j) \in F_{LR}$ if and only if $(v_i,v'_j) \in E$. The
right-to-left directed graph is $H_{RL}=(U,F_{RL})$
where $U=\{u_1,\ldots,u_{n_2}\}$ and $(u_i,u_j) \in F_{RL}$ if and only if $(v_j,v'_i) \in E$.
\end{definition}

With those definitions, an edge $(v_i,v'_j)$ is on a left-augmented (resp. right-augmented)
alternating path if and only if the edge $(u_i,u_j) $ is reachable in $H_{LR}$ (resp. $H_{RL}$) from one
of the nodes $u_k$, where $t+1 \leq k \leq n_2$. Hence, all that is needed is to perform a BFS of $H_{LR}$ (resp. $H_{RL}$) from those nodes
and for every edge $(u_i,u_j) $ that we visit along this scan, mark the corresponding edge $(v_i,v'_j)$ (resp. $(v_j,v'_i)$) in $G$ as an allowed edge.
The runtime of such a scan is linear. Algorithm \ref{algorithm:cycles2} summarizes the procedure of finding all allowed edges in a general bipartite graph.

\begin{algorithm}[h!!]
\caption{\label{algorithm:cycles2} Finding all allowed edges in a general bipartite graph, given a maximum matching.}
\begin{algorithmic}[1]
\INPUT A bipartite graph $G=(V,E)$ where $V=V_1 \cup V_2$,
$V_1=\{v_1,\ldots,v_{n_1}\}$, $V_2=\{v'_1,\ldots,v'_{n_2}\}$, $E \subseteq V_1 \times V_2$,
and $\{(v_i,v'_i): 1 \leq i \leq t\}$ is a given maximum matching in $G$.
\OUTPUT All allowed edges in $E$.
\STATE Mark all lower edges as allowed.
\STATE Mark all upper edges as not allowed.
\STATE Apply Algorithm \ref{algorithm:cycles} on the restriction $G_u$ of $G$ to $\{v_1,v'_1,\ldots,v_{t},v'_{t} \}$, thus marking all upper edges that are allowed in $G_u$.
\STATE Construct the left-to-right directed graph $H_{LR}$.
\STATE Add to $H_{LR}$ a new source node $u_0$ and connect it to each of the nodes
$u_k$, where $t+1 \leq k \leq n_1$.
\STATE Apply a BFS on the graph $H_{LR}$ starting from $u_0$. For each edge $(u_i,u_j)$ that the BFS visits,
mark the corresponding edge $(v_i,v'_j)$ in $G$ as allowed.
\STATE Construct the right-to-left directed graph $H_{RL}$.
\STATE Add to $H_{RL}$ a new source node $u_0$ and connect it to each of the nodes
$u_k$, where $t+1 \leq k \leq n_2$.
\STATE Apply a BFS on the graph $H_{RL}$ starting from $u_0$. For each edge $(u_i,u_j)$ that the BFS visits,
mark the corresponding edge $(v_j,v'_i)$ in $G$ as allowed.
\end{algorithmic}
\end{algorithm}

\section{The dynamic setting}\label{disc3}
Going back to the opening example of the matchmaking agency, if one of the allowed edges materializes and
another couple of clients leaves happily the matching game, the agency has to update the set of allowed edges in the reduced graph.
Here we comment on how to cope with such dynamic updates efficiently. Before doing so, we describe another interesting example.

A domino board is a bounded region $D$ in the Eucledean plane which is the union of unit squares ($S_{i,j}:=[i,i+1] \times [j,j+1]$ for some $i,j \in \NN$) that are connected
in the sense that there exists a path between the centers of every two unit squares that is fully contained within the interior of $D$.
Each unit square $S_{i,j}$ is colored white in case $i+j$ is even, or black otherwise.
A perfect tiling of the region $D$ is a cover of $D$
by $|D|/2$ non-overlapping dominos ($|D|$ being the area of $D$), where dominos are shapes formed by the union of two unit squares meeting edge-to-edge.

Now, let us consider the following computerized game: A domino region that has a perfect tiling is presented to the player, who needs to tile it by placing
dominos on it, one at a time.
If he places
a domino in a location that prevents any completion of the tiling, the computer issues an alert and then the player has a chance to try again.
The player wins if he was able to complete the tiling successfully with a number of
bad moves below some predefined threshold.

Here too, there is an underlying bipartite graph, where each node represents a white or a black square in $D$. Each square is connected to any of its neighboring
squares of the opposite color. A perfect tiling of $D$ is a perfect matching in that graph.
The bad moves correspond to edges in that graph that are not allowed.
In each step in the game, the player places one domino; it is then
needed to remove the corresponding nodes and adjacent edges from the underlying graph and update the list of allowed edges in the reduced graph.

\smallskip
Following the two motivating examples, we now formulate the problem in the dynamic setting.
Let $G$ be a bipartite graph as discussed in Section \ref{sec22}. Assume that we already found
a maximum matching in $G$, say $M:=\{(v_1,v'_1),\ldots,(v_{t},v'_{t}) \} $, and identified all allowed edges in $G$.
In addition, we may assume
that each allowed edge
is marked by its type: In balanced graphs with a perfect matching all allowed edges are of the same type, but in other graphs we distinguish between allowed edges
that are lower edges, and allowed edges that are upper edges of type I or of type II.
Let $(v_i,v'_j)$ be an allowed edge in $G$ and let $G'$ be the graph that is obtained from $G$ by removing $v_i$, $v'_j$, and all adjacent edges.
We wish to find a maximum matching in $G'$ in an efficient manner, in order to proceed and identify in linear time all allowed edges in $G'$.
We separate the discussion to three cases:
\begin{itemize}
\item If $(v_i,v'_j)$ is an allowed edge of type I, we look in
the directed graph $H_{LR}$ for a path from $u_j$ to $u_i$. Assume that the path is
$$(u_j,u_{j_1}),(u_{j_1},u_{j_2}),\ldots,(u_{j_{\ell-1}},u_{j_\ell}), (u_{j_\ell},u_i)\,.$$
Then it is easy to see that
$$ M \setminus \{ (v_k,v'_k): k \in \{i,j,j_1,\ldots,j_\ell\}\} \cup \{(v_j,v'_{j_1}), (v_{j_1},v'_{j_2}),\ldots,(v_{j_{\ell-1}},v'_{j_\ell}), (v_{j_\ell},v'_i)\} $$
is a maximum matching in $G'$.
\item If $(v_i,v'_j)$ is an allowed edge of type II, then Theorem \ref{thm1} implies that
either the edge $(u_i,u_j)$ is contained in a path in $H_{LR}$ that ends with a lower node,
or the edge $(u_j,u_i)$ is contained in a path in $H_{RL}$ that ends with a lower node.
Hence, we may apply a BFS on $H_{LR}$ starting in $u_i$ in order to look for a path that connects it to a lower node;
if none is found, a similar BFS on $H_{RL}$, starting in $u_j$, is guaranteed to find a path that reaches a lower node.
Once such a path is found, we may reconstruct the corresponding maximum matching $M_0$ in $G$ that contains the edge $(v_i,v'_j)$ (see Eq. (\ref{m0})).
Clearly, $M_0 \setminus \{(v_i,v'_j)\}$
is a maximum matching in $G'$.
\item Finally, if $(v_i,v'_j)$ is a lower edge then $M \setminus \{(v_i,v'_i)\}$ is a maximum matching in $G'$ in case $i \leq t$, or
$M \setminus \{(v_j,v'_j)\}$ is, in case $j \leq t$.
\end{itemize}

\section{Related work}\label{related}
Costa \cite{C94} presented an algorithm for decomposing the edge set of a bipartite graph $G=(V,E)$ into three disjoint partitions, $E=E_1 \cup E_w \cup E_0$, where
$E_1$ contains all allowed edges that belong to all maximum matchings in $G$, $E_w$ contains all other allowed edges, and $E_0$ consists of the non-allowed edges.
The runtime of
her algorithm is $O(nm)$.
Carvalho and Cheriyan \cite{CC05} designed an algorithm with the same time complexity for finding all allowed edges in
general graphs, using well-known results on efficient
implementations of Edmonds' maximum-matching algorithm and other results from the matching folklore.

Rabin and Vazirani \cite{RV89} designed a simple randomized algorithm for finding all allowed edges in perfectly-matchable general graphs.
We outline their algorithm: Let $G=(V,E)$ be the input graph. Its Tutte matrix is defined as the following $n \times n$ matrix,
$$ A_{i,j}=\left\{ \begin{array}{ll} x_{i,j} & (v_i,v_j)\in E ~,~~ i>j \\ -x_{i,j} & (v_i,v_j)\in E ~,~~ i<j \\ 0 & \mbox{otherwise} \end{array} \right. \,, $$
where $x_{i,j}$ are indeterminates. Let $p>n^4$ be any prime, $S(A)$ be a random substitition of all $x_{i,j}$ with elements from $GF(p)$, and $B = S(A)^{-1}$.
Then with probability at least $e^{-2/n^2}$, the matrix $B$ identifies the set of allowed edges, in the sense that
$(v_i,v_j)$ is an allowed edge if and only if $B_{i,j} \neq 0$.
The runtime of this algorithm is determined by matrix inversion time and it equals $\tilde{O}(n^{2.376})$. Cheriyan \cite{C97} designed a similar algorithm
with the same runtime that applies for any general graphs (i.e., not necessarily ones with a perfect matching).

Our algorithm is deterministic. Its runtime is determined by the time to find a single maximum matching,
which is $O(n^{1/2}m)$ (Hopcroft and Karp \cite{HK}) or $O((n/\log n)^{1/2}m)$ for dense graphs with $m=\Theta(n^2)$ (Alt et al. \cite{ABMP}).
Hence, it improves upon that of \cite{C94,CC05} by a factor of $n^{1/2}$ (or $(n \log n)^{1/2}$ for dense graphs, where $m=\Theta(n^2)$).
In comparison to the randomized algorithms, our algorithm is faster when $m = O(n^r)$ and $r < 1.876$, since then its runtime is $O(n^{r+1/2})$, where $r+1/2 < 2.376$.
In addition, our algorithm is elementary, conceptually simple, and easy to implement.
Another advantage of our algorithm in comparison to the above mentioned algorithms is that in cases where there is one "natural" maximum matching, which
is known without invoking a costly procedure for finding one, the runtime of our algorithm is only $O(n+m)$. The runtime of all of the above mentioned algorithms
remains the same even in such cases. (Examples of cases where one maximum matching is known upfront are described in Section \ref{epilogue}.)

Identifying all allowed edges is equivalent to computing the union of all maximum matchings in the graph.
Another line of research concentrated on algorithms for enumerating all maximum matchings in bipartite graphs, e.g. \cite{FM94,Uno97,Uno01}.
Naturally, the runtime of such algorithms depends on the number of the counted matchings (perfect matchings in \cite{FM94,Uno01} and maximal, maximum, or perfect
matchings in \cite{Uno97}). We recall in this context that counting the number of perfect matchings in a bipartite
graph is equivalent to computing the permanent of
a $\{0, 1\}$-matrix, a problem that is known to be in \#P-complete \cite{val}.

\section{Epilogue}\label{epilogue}
This study was motivated by a problem of providing anonymity in databases \cite{GMT}. Consider a data owner that has a database
of records that hold information about individuals in some population. He needs to publish the database for the purpose of analysis and public scrutiny.
In order to thwart adversarial attempts to identify the individuals behind some of those records, he generalizes the data in those records so that each generalized
record would be consistent with the personal information of more than one individual.
In order to ensure a minimal level of anonymity, the data owner wishes to guarantee that each
generalized record would be consistent with the personal information of no less than $k$ individuals, for some given threshold $k$.
As the adversary knows that there is a one-to-one mapping
between the population of individuals and the records in the generalized database, this setting may be modeled by a bipartite graph that has a perfect matching.
Hence, the adversary knows that all edges
in the bipartite graph that are not allowed can be ignored, as they cannot stand for a true linkage between an individual and his
record in the table. Therefore, the data owner needs to make sure that even after
the removal of such non-allowed edges from the bipartite graph, the anonymity constraint is still respected. Since the data owner knows the true perfect matching (because
he has the raw data and he performs on it the generalization), he
may use Algorithm \ref{algorithm:cycles} to identify all non-allowed edges efficiently and then verify that the generalized version of the database that he created
provides the required level of anonymity.

Another example from the realm of privacy and anonymity is the work of Yao et al. \cite{YaoWJ05}. In that work, they consider the case
in which multiple views of the same database need to be released and it is needed to check whether their release would violate
some privacy constraint ($k$-anonymity). The algorithm which they suggest needs to identify all edges in a bipartite graph that are part of
some complete matching. (A matching in a bipartite graph is called complete if it covers all nodes in the possibly smaller part of the graph.)
They suggest to do it using a na\"{\i}ve approach: Testing each edge in the graph whether its removal leaves a graph that still has a complete matching.
Such an approach entails invoking the Hopcroft-Karp algorithm for each edge, whence its time complexity is $O(n^{1/2}m^2)$. However, in the case that was
studied there, the data owner knows one complete matching in the underlying bipartite graph. Whence, he may use Algorithm
\ref{algorithm:cycles2} in order to perform the same task in $O(n+m)$. The improvement factor in runtime is
$O(n^{1/2}m)$. As in such applications the graphs tend to be very large, such an improvement offers a practical solution instead
of an impractical one.

An interesting research problem that this study suggests is to extend this approach to the general case, namely to non-bipartite graphs.
The question is whether a similar algorithm that uses one known maximum matching
can be devised in order to find all allowed edges efficiently.

\newpage
\appendix
\section{Extending a set of non-adjacent edges to a matching}\label{disc1}
The discussion in Section \ref{sec2} implies that any set of $k$ non-adjacent edges in a bipartite graph may be extended to a matching of size
at least $t-k$, where $t$ is the size of a maximum matching in $G$.

\begin{theorem}\label{lowb} Let $G=(V,E)$ be a bipartite graph where the two parts of the
graph are $V_1 = \{ v_1,\ldots,v_{n_1}\}$ and $V_2 = \{v'_1,\ldots,v'_{n_2}\}$, $n_1 \leq n_2$, and the maximum matchings are of size $t \leq n_1$.
Let $P$ be a set of $k$ non-adjacent edges in $E$. Then there exists a matching of size $t-k$ in $G$ that includes $P$.
More specifically, assume that
$P=\{ (v_{j_i},v'_{\ell_i}): 1 \leq i \leq k \}$
where $A=\{j_1,\ldots,j_k\}  \subset [n_1]$ and $A'=\{\ell_1,\ldots,\ell_k\}  \subset [n_2]$ are the corresponding subsets of $k$ distinct indices.
Then there exists a matching of size $k+t-|A \cup A'|$ in $G$ that includes $P$.
\end{theorem}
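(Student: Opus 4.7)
The plan is to exhibit an explicit matching of the required size that contains $P$. Following the convention adopted throughout the paper, I will assume without loss of generality that the given maximum matching has the canonical form $M = \{(v_i, v'_i) : 1 \leq i \leq t\}$, since this is merely a relabeling of the two sides of the graph, and the index sets $A$ and $A'$ appearing in the statement are to be understood relative to such a labeling (otherwise the quantity $|A \cup A'|$ has no natural meaning for the problem).

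The construction is simply
$$M^* := P \cup \bigl\{(v_i, v'_i) : 1 \leq i \leq t \text{ and } i \notin A \cup A'\bigr\}.$$
Both pieces lie in $E$: the edges of $P$ by hypothesis, and the added edges $(v_i, v'_i)$ because $M \subseteq E$. To verify that $M^*$ is a matching, I note that the $k$ edges of $P$ are pairwise non-adjacent by assumption, and the added edges are pairwise non-adjacent for distinct $i$. The only remaining concern is that an added edge $(v_i, v'_i)$ might share an endpoint with some edge of $P$; but $i \notin A$ guarantees that $v_i$ is uncovered by $P$, and $i \notin A'$ guarantees the same for $v'_i$, so no conflict can arise.

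It remains to count: the number of indices $i \in \{1, \ldots, t\}$ excluded from the union is $t - |(A \cup A') \cap \{1, \ldots, t\}|$, so
$$|M^*| = k + t - \bigl|(A \cup A') \cap \{1, \ldots, t\}\bigr| \;\geq\; k + t - |A \cup A'|,$$
which is the bound claimed. The weaker first assertion of the theorem (a matching of size at least $t-k$) then follows for free from the trivial estimate $|A \cup A'| \leq |A| + |A'| = 2k$.

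I do not expect any genuine obstacle here: the whole argument is a one-line explicit construction. The only point that requires a moment of care is the interpretation of $A \cup A'$ as a union of integer indices rather than of vertices, which is exactly what allows the canonical form of $M$ to be exploited. No appeal to alternating cycles or augmenting paths is needed; this is strictly easier than the results of Section \ref{sec2}.
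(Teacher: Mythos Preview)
Your proof is correct and is essentially identical to the paper's: both assume the canonical maximum matching $M=\{(v_i,v'_i):1\le i\le t\}$, form the explicit matching $P\cup\{(v_i,v'_i): i\in[t]\setminus(A\cup A')\}$, and derive the size bound and then the weaker $t-k$ bound via $|A\cup A'|\le 2k$. If anything, your version is slightly more careful in writing the count as $t-|(A\cup A')\cap[t]|\ge t-|A\cup A'|$, whereas the paper just states the inequality directly.
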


\begin{proof} As before, we may assume, without loss of generality, that one of the maximum matchings in $G$ is
$M:=\{(v_1,v'_1),\ldots,(v_{t},v'_{t}) \} $.
Let $ B = [t] \setminus (A \cup A') $ denote the set of indices of edges from $M$ that are not adjacent to any of the edges in $P$.
The size of $B$ is at least $t - |A \cup A'|$.
Clearly, $P \cup \{ (v_i,v'_i) : i \in B \}$ is a matching that includes $P$. Its size is $k + |B| \geq k+t-|A \cup A'|$.
Since $|A \cup A'| \leq |A|+|A'|=2k$, we infer that $P$ is included in a matching of size at least $t-k$.
\end{proof}

It is easy to see that the lower bound in Theorem \ref{lowb} is sharp. To exemplify that, consider the case where $M$ is a perfect matching consisting of $t$ edges, the set of $k$ non-adjacent
edges $P$ is such that $|A \cup A'|=2k$, and $E=M \cup P$. The only way to extend $P$ into a matching is by adding to it edges from $M$. The maximal number
of edges from $M$ that could be added to $P$ is precisely $t-|A \cup A'|$ in this case.

\section{All edges in a regular bipartite graph are allowed}\label{disc2}
As noted earlier, the problem of finding a maximum matching in a $d$-regular bipartite graph is easier. The maximum matchings in such graphs are perfect.
A randomized algorithm due to Goel et al. \cite{GKK10} finds them in time $O(n \log n)$. However, the problem of finding all allowed edges in such graphs is trivial,
since the set of edges in such graphs may be partitioned into a disjoint union of $d$ perfect matchings, and, consequently, all edges in such graphs are allowed.

\begin{theorem} All edges in a regular bipartite graph are allowed.
\end{theorem}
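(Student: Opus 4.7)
The plan is to invoke the classical fact that every $d$-regular bipartite graph decomposes into $d$ edge-disjoint perfect matchings, and then observe that this immediately places every edge inside some perfect matching, which is in particular a maximum matching.

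First I would verify that the two sides of a $d$-regular bipartite graph $G=(V_1\cup V_2, E)$ have the same size by counting edges from each side: $d\,|V_1| = |E| = d\,|V_2|$, so $|V_1|=|V_2|$. Next I would establish, via Hall's Marriage Theorem, that $G$ admits a perfect matching: for any $S \subseteq V_1$, the number of edges leaving $S$ equals $d\,|S|$, and each edge is counted at most $d$ times by the neighborhood $N(S)$, so $d\,|N(S)| \geq d\,|S|$, giving $|N(S)| \geq |S|$. Hall's condition thus holds and a perfect matching $M_1$ exists.

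Then I would proceed by induction on $d$. Removing the perfect matching $M_1$ from $G$ yields a $(d-1)$-regular bipartite graph, to which the inductive hypothesis applies, producing a decomposition $E = M_1 \cup M_2 \cup \cdots \cup M_d$ into $d$ pairwise edge-disjoint perfect matchings. (Equivalently, one may simply cite König's edge-coloring theorem, which states that the chromatic index of any bipartite graph equals its maximum degree; each color class in such a proper $d$-edge-coloring of a $d$-regular bipartite graph is a perfect matching.)

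Finally, given any edge $e \in E$, the decomposition places $e$ in some $M_i$, and $M_i$ is a perfect matching of $G$. Since in a balanced bipartite graph with a perfect matching the maximum matchings are precisely the perfect matchings, $M_i$ is a maximum matching containing $e$, so $e$ is allowed. There is no real obstacle here beyond citing the right classical theorem; the only subtlety is making sure the inductive step is phrased cleanly (or else appealing to König's theorem as a black box), after which the conclusion that every edge lies in one of the $d$ perfect matchings is immediate.
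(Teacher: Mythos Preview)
Your proposal is correct and follows essentially the same approach as the paper: both establish that a $d$-regular bipartite graph is balanced, invoke Hall's theorem to find a perfect matching, peel it off to obtain a $(d-1)$-regular graph, and iterate to decompose $E$ into $d$ perfect matchings, whence every edge is allowed. Your write-up is slightly more explicit in the edge-counting for balance and in verifying Hall's condition, and you additionally mention K\"onig's edge-coloring theorem as an alternative shortcut, but the underlying argument is the same.
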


\begin{proof} In regular bipartite graphs, $G=(V,E)$, where $V=V_1 \cup V_2$, all nodes have the same degree $d$. Hence, such graphs are balanced, $|V_1|=|V_2|$.
Let $X$ be any subset of $V_1$ and let $N(X)$ denote the set of all its neighbors in $V_2$. Clearly, $|N(X)| \geq |X|$ since otherwise at least one of the nodes
in $N(X)$ would have to be of degree higher than $d$. Hence, Hall's marriage theorem \cite{Hall} implies that $G$ has a perfect matching.
Let $M_1$ be such a matching and consider the graph $G_1=(V,E \setminus M_1)$. This graph is $(d-1)$-regular, whence it too has a perfect matching, say $M_2$.
Continuing in that manner we may construct $d$ disjoint perfect matchings, $M_1,\ldots,M_d$, such that $E=\bigcup_{1 \leq i \leq d}M_i$. It follows that
all edges in $E$ are allowed.
\end{proof}

\newpage
\bibliographystyle{abbrv}
\bibliography{perfectmatch}

\end{document}